\documentclass[graybox]{svmult}

\usepackage{mathptmx}       
\usepackage{helvet}         
\usepackage{courier}        
\usepackage{type1cm}        
\usepackage{makeidx}         
\usepackage{graphicx}        
\usepackage{multicol}        
\usepackage[bottom]{footmisc}

\usepackage{amssymb}

\makeindex             

\begin{document}

\title*{Integral of an alpha unpredictable function.}
\author{M. U. Akhmet, N. Tilessov}

\institute{M. U. Akhmet \at Department of Mathematics, Middle East Technical University, 06800, Ankara, Turkey, \email{marat@metu.edu.tr,}}
\maketitle
\abstract{The present paper is devoted to the study of integral properties of alpha unpredictable functions. The problem of invariance of recurrence under integration is one of the most challenging and interesting in the theory of functions. Let us start with periodicity. Next, the problem was solved for quasiperiodic, almost periodic, and Poisson stable functions. All of the problems were subjected to the condition of bounded integrals. The alpha-unpredictable functions are the endpoint in the row of recurrent functions. The class is the cross-border point from regularity to chaos in the dynamics presented through the elements of the row. In the present research, we finalized the proof that any theoretical functional recurrence is invariant with respect to integration, provided the result is bounded.}

\section{Introduction}

The theory of recurrent functions holds a significant position in modern analysis, particularly in the fields of differential equations, and began in celestial mechanics \cite{PoinRec,Poin}. It provides a framework for describing a broad class of nonlinear and nonautonomous processes that exhibit regular, yet not strictly periodic, dynamics \cite{Birkhoff}. 

Beginning with the classical works of P. Bohl, E. Esclangon \cite{Bohl, Esclangon}, and culminating with H. Bohr \cite{Bohr}, the foundational concepts of quasi-periodic functions and almost periodicity were established \cite{LZ}. These concepts have found numerous applications in mathematical physics, control theory, biology, and neural networks. The first instance of non-closed recurrent dynamics, known as Poisson stability, was introduced by H. Poincaré in 1892. This area of study has become increasingly sophisticated in relation to recurrent functions. Subsequent research built upon Poincaré's ideas, filling the gap between periodicity and Poisson stability. Our current research pushes this understanding further.

In exploring chaotic problems, we discovered that previous studies portrayed chaos primarily as a "collective" phenomenon. By extending H. Poincaré's definition to include the property of alpha unpredictability in "individual" dynamics, we proved that chaos occurs in the closure of an alpha unpredictable trajectory \cite{Akhmet}. We have termed this new class of processes "ultra Poincaré chaos." The functions examined in this study represent initial points in the dynamics of shifts \cite{sell} within functional spaces. Consequently, we can assert that our research is focused on investigating chaos, particularly concerning conditions under which chaos is preserved in the presence of an integral operator. 

On methodological aspects of the research, it is worth noting that approvement of the convergence sequence in  the Poisson stability have traditionally been approached through methods of functional analysis \cite{LZ} or the method of comparability \cite{Shch}. In contrast, our study proposes a new, simple, and effective \textit{method of included intervals} \cite{Akhmet}. It suggests that to prove convergence on a bounded interval, one must identify a larger one such that convergence on this including set implies stability in the included interval. In this article, we have developed this method and expanded the number of structures that can be analyzed within this framework. To demonstrate its efficacy and potential, we have verified a known auxiliary result in the main part of our discussion. It effectively works for various types of differential equations and real-world models \cite{Akhmet,kagan1}, but the present realisation of the method is a specific one, and can be useful for further analysis of the theory of functions. 

Notably, alpha unpredictability, which encompasses Poisson stability, serves as a constructive condition for verification and is weaker than other conditions typically used for synchronization. We have written several papers demonstrating that conservative criteria are ineffective, while synchronization of divergent and convergent sequences of alpha unpredictability is indeed useful \cite{kagan1}. This finding may hold significant interest for research into chaos.

\section{ The main result}

In the present research, we consider continuous real-valued functions that are determined and uniformly bounded on the real axis.

\begin{definition} \cite{sell,LZ,Stepanov} \label{poisson}
	An uniformly continuous function \( f(t) \) is said to be \textit{Poisson stable} if there exists a strictly increasing sequence \( t_n \) that approaches infinity, such that the sequence of shifts \( f(t+t_n) \) converges uniformly to the function \( f(t) \) on every bounded interval.
\end{definition}
The set  $t_n$ is called a \textit{ convergence} or \textit{Poisson sequence}. 

\begin{definition} \cite{Akhmet} \label{alpha}A Poisson-stable function \( f(t) \) is defined as alpha unpredictable if, in addition to having a convergent sequence, there exist a strictly increasing sequence \( s_n \) that tends to infinity, and  positive numbers \( \epsilon_0 \) and \( \delta \) such that  for each index \( n \), the following condition is satisfied: 
	\[ 	|f(t + t_n) - f(t)| > \epsilon_0 \quad \textbf{for all } \quad t \in [s_n, s_n + \delta]. 
	\]
\end{definition}

The sequence \(s_n\) is referred to as a  \textit{divergence sequence}, while the intervals \([s_n - \delta, s_n + \delta]\) are known as \textit{divergence intervals}.

The existence of a sequence \(t_n\) is called \textit{Poisson stability}, and the existence of a sequence \(s_n\) and constants \(\epsilon_0, \delta\) is called the \textit{alpha unpredictability property}.

The main object to analyze in the paper is the integral $$F(t) = \int_{t_0}^{t} f(s)\,ds.$$ It is easily seen that the function is defined for all real numbers. The uniform continuity of the function follows from the boundedness of the integrand.


\begin{theorem} \label{theorem} Assume that a function $f(t)$ is alpha unpredictable. The integral $F(t)$ is alpha unpredictable if and only if it is bounded. 
\end{theorem}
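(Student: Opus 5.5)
The plan is to prove necessity directly and to split sufficiency into Poisson stability of $F$ and its alpha unpredictability property. Necessity is immediate: in the present setting all functions under consideration, in particular alpha unpredictable ones (which are Poisson stable in the sense of Definition \ref{poisson}), are uniformly bounded on the real axis. So if $F$ is alpha unpredictable, it is bounded. The content is the converse. Assume $\sup_{t}|F(t)| = M < \infty$, and let $t_n$, $s_n$, $\epsilon_0$, $\delta$ be the sequences and constants of Definition \ref{alpha} for $f$. Uniform continuity of $F$ is already noted in the text.

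\emph{Poisson stability of $F$.} For every $t$ we write
\[
F(t+t_n)-F(t) = \int_{t_0}^{t_0+t_n} f(s)\,ds + \int_{t_0}^{t}\bigl(f(s+t_n)-f(s)\bigr)\,ds = c_n + r_n(t).
\]
Since $f(s+t_n)\to f(s)$ uniformly on bounded intervals, $r_n\to 0$ uniformly on every bounded interval. The numbers $c_n=F(t_0+t_n)$ satisfy $|c_n|\le M$, so after passing to a subsequence (still denoted $t_n$, with the corresponding $s_n$ kept) we have $c_n\to c$. Thus $F(t+t_n)\to F(t)+c$ uniformly on bounded intervals.

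The key step is to show that $c=0$. Suppose $c\neq 0$. Fix $k$ and a bounded interval $I$. Applying the convergence on the shifted interval $I+t_m$ gives $F(t+t_m+t_n)\to F(t+t_m)+c$ as $n\to\infty$, uniformly on $I$. Choosing first $m$ large and then $n$ large, we find shifts $\tau$ with $F(t+\tau)$ arbitrarily close to $F(t)+2c$ on $I$. Iterating this $k$ times produces a shift $\tau_k$ with $|F(t_0+\tau_k)-F(t_0)-kc|<1$. For $k>(2M+1)/|c|$ this contradicts $|F|\le M$. Hence $c=0$, and $F$ is Poisson stable with the convergence sequence $t_n$. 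I expect this step to be the main obstacle; it needs a careful diagonal argument over the iterated shifts.

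\emph{Alpha unpredictability of $F$.} Set $G_n(t)=F(t+t_n)-F(t)$, so that $G_n'(t)=f(t+t_n)-f(t)$. On $[s_n,s_n+\delta]$ we have $|G_n'|>\epsilon_0$, and $G_n'$ is continuous, so it has constant sign there. Hence $G_n$ is strictly monotone on this interval with $|G_n(t)-G_n(t')|\ge\epsilon_0|t-t'|$.

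It follows that $|G_n|\ge \epsilon_0\delta/4$ either on $[s_n,s_n+\delta/4]$ or on $[s_n+3\delta/4,s_n+\delta]$. Indeed, if $G_n$ took values of absolute value less than $\epsilon_0\delta/4$ at a point of each of these subintervals, then, since the two points are at distance at least $\delta/2$, monotonicity would force a difference of at least $\epsilon_0\delta/2$ between two numbers of absolute value less than $\epsilon_0\delta/4$, which is impossible.

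Let $s_n'$ be the left endpoint of the subinterval that works, and put $\delta'=\delta/4$ and $\epsilon_0'=\epsilon_0\delta/4$. Then $|F(t+t_n)-F(t)|>\epsilon_0'/2$ for all $t\in[s_n',s_n'+\delta']$. Since $s_n'\in[s_n,s_n+\delta]$ and $s_n\to\infty$, passing to a further common subsequence of $(t_n,s_n')$ makes $s_n'$ strictly increasing, and this subsequence preserves the convergence already established. This gives the alpha unpredictability of $F$ and completes the proof.
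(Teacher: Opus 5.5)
Your proof is correct. For the Poisson stability of $F$ it takes a genuinely different route from the paper.

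The paper uses Bohr's sup/inf device. It picks $t_1,t_2$ with $F(t_1)=\sup F-\varepsilon/8$ and $F(t_2)=\inf F+\varepsilon/8$, and places them together with $[a,b]$ in an including interval $[c,d]$. From the smallness of $f(t+t_n)-f(t)$ on $[c,d]$ it deduces that $F(t_1+t_n)-F(t_2+t_n)$ stays close to $\sup F-\inf F$. This pins the shifted values near the extremes and yields $|F(t+t_n)-F(t)|<\varepsilon$ on $[a,b]$ directly, for the original sequence $t_n$ and with an explicit threshold.

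You instead split off the drift $c_n=F(t_0+t_n)$ and rule out a nonzero limit by accumulating it under repeated shifts. This is softer, does not use the order structure of the real line (so it works verbatim for vector-valued $F$), and costs only a subsequence. Two simplifications are available:
\begin{itemize}
\item The iteration needs no diagonal argument. Since $F(x+t_n)-F(x)\to c$ at each fixed $x$, a finite induction at the single point $t_0$ suffices.
\item The subsequence is avoidable. By the same argument every subsequential limit of the bounded sequence $(c_n)$ is $0$, so $c_n\to 0$ and the full $t_n$ works, as the paper states.
\end{itemize}

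For divergence, both proofs choose between the two ends of $[s_n,s_n+\delta]$. The paper decides by the size of $|F(s_n+t_n)-F(s_n)|$ and uses a Lipschitz estimate through a bound on $|f|$ (its $h=\varepsilon_0\delta/(8M)$). Your monotonicity of $G_n$ gives intervals of length $\delta/4$ without any bound on $f$. The constant sign of $f(t+t_n)-f(t)$ that you make explicit is also used tacitly in the paper's case (ii). Your final extraction making $s_n'$ strictly increasing settles a point that the paper's choice $s_n'\in\{s_n,s_n+\delta-h\}$ leaves unaddressed.
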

\textit{Proof.} The necessity of the assertion is an immediate consequence of Definitions \ref{poisson}, \ref{alpha}.

The uniform continuity of the function follows from the boundedness of the integrand. In the following , we will apply the characteristics of the function \( f(t) \). Specifically, we will consider sequences \( t_n \) and \( s_n \), as well as the positive constants \( \varepsilon_0 \) and \( \delta \).

Firstly, we shall verify the sufficiency for existence of an convergence sequence, and for that we combine ideas of our \textit{method of intervals} \cite{Akhmet} with those of the \textit{sup} and \textit{inf} approach applied to a similar task in the book \cite{Bohr}. 

Let us  demonstrate that the convergence sequence $t_n$ is common for the both functions. 

Assume that \(F(t)\)  is bounded on the axis. That  is,  $ \inf_{-\infty<t<\infty} F(t)$ and $\sup_{-\infty<t<\infty} F(t)$
are real numbers. Denote them $m$ and $M$ respectively. Obviously, $m <M,$ since we plan to consider a non-trivial case. 

Let us fix a positive $\varepsilon$ and an interval $[a,b]$, as it is requested in Definition \ref{poisson}. According to the definitions of $\sup$ and $\inf$, there exist two points \(t_1\) and \(t_2\) such that
\begin{equation}
	F(t_1) = M - \frac{\varepsilon}{8},
\end{equation}
\begin{equation}
	F(t_2) = m + \frac{\varepsilon}{8}.
\end{equation}

To apply  the  method of included intervals,  let us fix the minimal-length interval $[c,d]$ that admits  the points $t_1,t_2, a,b$; consequently, we shall use it as the \textit{including interval}.    Next, consider  a large index $n$ such that 

\[|f(t+t_n)-f(t)| < \frac{5\epsilon}{8(d-c)},\]
if $t$ is between $c$ and $d$.
Then evaluate that 

\[  F(t_1+t_n)-F(t_2+t_n) = F(t_1) + \int_{t_1}^{t_1+t_n}f(s)ds -F(t_2)- \int_{t_2}^{t_2+t_n}f(s)ds \ge  \] 
\[(M - \frac{\varepsilon}{8}) -(m + \frac{\varepsilon}{8})  - \int_{t_1}^{t_2}(f(s+t_n)-f(s))ds > M-m -   \frac{\varepsilon}{4} - \frac{\varepsilon(d-c)}{4(d-c)}=\]
\[M-m - \frac{\varepsilon}{2}.\]
Rewrite the last inequality as 
\[(M -  F(t_1+t_n)) + (F(t_2+t_n)-m) < \frac{\varepsilon}{2}.\]
Now, we have that  for all arguments from the including interval, it is true that 
\[F(t+t_n)-F(t) = F(t_1+t_n) + \int_{t_1+t_n}^{t+t_n}f(s)ds -F(t_1)- \int_{t_1}^{t}f(s)ds =F(t_1+t_n) -F(t_1)+\] \[ \int_{t_1}^{t}(f(s+t_n)-f(s))ds > M - \frac{\varepsilon}{2} - M + \frac{\varepsilon}{8} - \frac{5\varepsilon(d-c)}{8(d-c)} = -\varepsilon. \]
Similarly, one can find that 
\[F(t+t_n)-F(t) = F(t_2+t_n) + \int_{t_2+t_n}^{t+t_n}f(s)ds -F(t_2)- \int_{t_2}^{t}f(s)ds =F(t_2+t_n) -F(t_2)+\] \[ \int_{t_2}^{t}(f(s+t_n)-f(s))ds < m + \frac{\varepsilon}{2} - m - \frac{\varepsilon}{8} + \frac{5\varepsilon(d-c)}{8(d-c)} = \varepsilon. \]
The last two inequalities imply that  $|[F(t+t_n)-F(t)| < \varepsilon$ for all arguments of the including interval, and consequently of the included interval $[a,b].$ It   has been approved that  $t_n$ is a convergence sequence for the integral-function.

Now, we shall  find a divergence sequence and other characteristics to finalize the proof. 
Fix an index $n,$  and consider two alternative  cases: $$(i) |F(s_n+t_n) - F(s_n)| \ge \frac{\varepsilon_0 \delta}{2},$$ and $$(ii) |F(s_n+t_n) - F(s_n)| <  \frac{\varepsilon_0 \delta}{2}.$$  In what follows, we shall show that for the integral-function the amplitude of divergence can be taken $\frac{\varepsilon_0 \delta}{4},$ the length of the interval divergence equals to $h= \frac{\varepsilon_0 \delta}{8M},$ and the element  $s'_n$ of the sequence for divergence moments can be chosen equal to $s_n$ or $s_n +\delta -h,$ depending on the two cases above. 

$(i).$ Assume that $|F(s_n+t_n) - F(s_n)| \ge \frac{\varepsilon_0 \delta}{2},$  and consider 
\[ F(t+t_n) - F(t) = F(s_n+t_n) - F(s_n) +  \int_{s_n}^{t}(f(s+t_n)-f(s))ds.\]
Denote $h= \frac{\varepsilon_0 \delta}{8M},$ and evaluate the last difference  on the interval 
$[s_n,s_n +h],$ 
\[ |F(t+t_n) - F(t)| >   \frac{\varepsilon_0 \delta}{2} -  2Mh = \frac{\varepsilon_0 \delta}{4}.\]
Thus, for the choice we have that the interval of divergence is $[s_n,s_n +h],$ and amplitude of the divergence is $ \frac{\varepsilon_0 \delta}{4}.$

$(ii).$  Let  the following inequality be correct:$|F(s_n+t_n) - F(s_n)| <  \frac{\varepsilon_0 \delta}{2}$  for the fixed index.  First, we consider the equality
\[ F(s_n+ \delta+ t_n) - F(s_n+\delta)  =  F(s_n+t_n) - F(s_n)+  \int_{s_n}^{s_n+\delta}(f(s+t_n)-f(s))ds.\]
Apply the equality to obtain that 
\[ |F(s_n+ \delta+ t_n) - F(s_n+\delta)| > \varepsilon_0 \delta -   \frac{\varepsilon_0 \delta}{2} = \frac{\varepsilon_0 \delta}{2}.\]
Now, use the equality,
\[F(t+t_n) - F(t) = F(s_n+ \delta+t_n) - F(s_n+ \delta) +  \int_{s_n+ \delta}^{t}(f(s+t_n)-f(s))ds, \]
to evaluate on the interval $[s_n+ \delta-h,s_n+\delta],$
\[  |F(t+t_n) - F(t)| > \frac{\varepsilon_0 \delta}{2} -   2Mh = \frac{\varepsilon_0 \delta}{4}. \]
It has been shown that the interval of divergence is $[s_n+ \delta-h,s_n+\delta],$ and amplitude of the divergence is $ \frac{\varepsilon_0 \delta}{4}.$

Thus, the characteristics for alpha unpredictability have been obtained for the fixed index. Since they do not depend on the index, we can confirm that the property is checked for the integral-function. 
The theorem is proved. 

\begin{corollary} Assume that $f(t)$ is an alpha unpredictable function, and 
	\[\int_{t_0}^{t}f(s)ds = ct + g(t),\]
	where $c$ is a real constant. If  $g(t)$  is a bounded function, then  it is alpha unpredictable one.		
\end{corollary}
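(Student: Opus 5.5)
The plan is to reduce the corollary to Theorem \ref{theorem}, applied to a shifted integrand. Set $\tilde f(t) = f(t) - c$. Then
\[\int_{t_0}^{t}\tilde f(s)\,ds = ct + g(t) - c(t-t_0) = g(t) + ct_0,\]
so $g(t) = \int_{t_0}^{t}\tilde f(s)\,ds - ct_0$. This integral is bounded, because $g$ is bounded by assumption. Once we know that $\tilde f$ is alpha unpredictable, Theorem \ref{theorem} shows that $\int_{t_0}^{t}\tilde f(s)\,ds$ is alpha unpredictable. Subtracting the constant $ct_0$ does not change any of the differences $g(t+t_n)-g(t)$, so $g$ inherits the property with the same sequences and constants.

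The main step is to check that $\tilde f = f - c$ is alpha unpredictable, and this follows directly from the definitions. The function $\tilde f$ is uniformly continuous and bounded, since $f$ is. For every $n$ and $t$ we have
\[\tilde f(t+t_n) - \tilde f(t) = f(t+t_n) - f(t),\]
because the constant cancels. Hence the convergence sequence $t_n$ of $f$ works for $\tilde f$ in Definition \ref{poisson}, and the divergence sequence $s_n$ with the constants $\epsilon_0,\delta$ works in Definition \ref{alpha}, with no changes.

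I do not expect a real obstacle. The one thing to check is that the theorem is applied to a genuinely nonconstant bounded integral, which is the ``non-trivial case'' $m<M$ assumed in the theorem's proof. Suppose the integral of $\tilde f$ were constant. Then $\tilde f \equiv 0$, so $f \equiv c$. A constant function cannot satisfy the divergence inequality with $\epsilon_0>0$, so this contradicts the alpha unpredictability of $f$. Thus $m<M$ holds, and the theorem applies.

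In order, the proof has three steps: define $\tilde f$ and note it is alpha unpredictable with the same characteristics; compute its integral as $g + ct_0$ and note that it is bounded; then invoke the sufficiency part of Theorem \ref{theorem} and transfer the conclusion to $g$ by removing the additive constant.
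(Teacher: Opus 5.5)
Your proposal is correct and is essentially the paper's proof. You apply the theorem to $f-c$, whose integral from $t_0$ is the bounded function $g(t)+ct_0$, and then remove the additive constant; unlike the paper, you verify directly, since constants cancel in $\tilde f(t+t_n)-\tilde f(t)$ and $g(t+t_n)-g(t)$, the two closure facts the paper cites from its algebra of alpha unpredictable functions, and you also check the non-degenerate case $m<M$.
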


\textit{Proof.}  One can write that 
\[\int_{t_0}^{t}(f(s)-c)ds = ct_0 + g(t).\]	
The difference  $f(t) -c$ is alpha unpredictable function \cite{Akhmet}, and  the sum $ct_0 + g(t)$ is bounded. So, by the theorem,  the last function is unpredictable. Consequently, the function $g(t)  = (ct_0 + g(t))  - ct_0$ is  alpha unpredictable \cite{Akhmet}. Here we have applied the algebra for alpha unpredictable functions developed in our former papers.   The corollary is proved.


\section{Discussion}  The study of recurrent functions is an excellent source for formulating problems in function theory and functional analysis, predominantly focusing on periodic and almost periodic functions. In our research, we introduce a new class of recurrency called \textit{alpha unpredictable functions}. Consequently, we recognize the need to develop a new theoretical framework and are actively pursuing this direction.

We have already established the algebraic properties of alpha predictable functions in our published works and created algorithms to generate new members of this class with hybrid properties. These hybrid functions combine features of both regular and chaotic behaviors, and we refer to them as \textit{compartmental alpha unpredictable functions}. The foundational basis for this class consists of periodic or almost periodic functions.

Additionally, we have explored extra conditions for periods and convergence sequences that ensure the alpha unpredictability of such composite functions. Although some elements of this emerging theory have been identified, numerous challenges still remain unresolved. In this article, we address one of them. We prove  the theorem on  integral of an unpredictable function, which is notable since a similar theorem already exists for periodic, quasi-almost-periodic, and Poisson stable functions. Thus, we continue to advance the development of the theory for recurrent functions. Moreover, we demonstrate a new technique of analysis for Poisson stability, more effective than those known in literature. 

The next most relevant assertion  in this series is  the alpha unpredictability of the derivative.  We also find the problem of Fourier series for the functions under examination to be particularly interesting, along with the study of the structure of convergence and divergence sequences, and the theory of functions on groups. Knowing that the alpha unpredictable function is a criterion for ultra-Poincare chaos, we are predicting amazing results to make fundamental contributions to the new development of the theory of chaos and how it turns out also for  the theory of functions and operators. Progress in this direction will open up new horizons in the development of applied mathematics. Given  complexity of the processes in our research, we understand that the proposed results can serve in the future for such applied research as quantum computers and also artificial intelligence.	


\begin{thebibliography}{99.}




\bibitem{PoinRec} Poincar\'{e}, H.;  Sur le probl\'{e}me destrois corpsetles \'{e}quations de la dynamique. {\em Acta Math.} {\bf 1890},  {\em 13}, 1--270.

\bibitem{Poin}  Poincar\'{e}, H.;  Les methodes nouvelles de la mecanique celeste. {\em Paris: Gauthier-Villars} {\bf 1892}, Vol. 1, 2. 


\bibitem{Birkhoff} Birkhoff, G. Dynamical Systems. {\em American Mathematical Society, Providence, RI, USA}, 1927.


\bibitem{Bohl}  Bohl, P.;  Über eine Differentialgleichung der Störungstheorie.{\em Grelles J.}, \textbf{ 1900}, {\em 131}, 268--321.

\bibitem{Esclangon}  Esclangon, E.;Les Fonctions Quasi-Periodiques. {\em Paris: Gauthier-Villars}, {\bf 1904}.  

\bibitem{Bohr} Bohr H.; Almost Periodic Functions. {\em Chelsea Publishing Company, New York}, \textbf{1947}.

\bibitem{LZ} Levitan, B.M.; Zhikov, V.V.;  Almost Periodic Functions and Differential Equations. {\em Cambridge: Cambridge University Press},\textbf{ 1982}, pp. 328.

\bibitem{Shch} Shcherbakov  B.A.: Poisson stable solutions of differential equations, and
topological dynamics. {\em Differential Equations}, \textbf{5}, \textbf{1969}, 2144–2155 .


\bibitem{Stepanov} Nemyckii, V.V and Stepanov, V.V.;  Qualitative Theory of Differential Equations (Russian), {\em Moscow, OGIZ}, \textbf{1947}.

\bibitem{sell} Sell, G. R.;  Topological Dynamics and Ordinary Differential Equations. {\em Van Nostrand Reinhold Company, London}, \textbf{1971}.


\bibitem{AkhmetFen} Akhmet M. and  Fen M.O. Unpredictable points and chaos. {\em Journal Communication in Nonlinear Science and Numerical Simulation}, {\bf 2016},   {\em 40}, 1--5.

\bibitem{Akhmet}  Akhmet M.;  Ultra Poincar\'e chaos and alpha labeling.{\em IOP, Birmingham}, \textbf{2025}.

\bibitem{kagan1} Akhmet M. U.,Baskan K. and Yesil C.:  Delta synchronization of {P}oincar\'e chaos in gas discharge-semiconductor systems. {\em Chaos}, \textbf{2022},  {\em 32(81)}, 083137.


\end{thebibliography}
\end{document}